\newtheorem{definition}{Definition}
\newtheorem{lemma}[definition]{Lemma}
\newtheorem{theorem}[definition]{Theorem}
\def\squareforqed{\hbox{\rlap{$\sqcap$}$\sqcup$}}
\def\qed{\ifmmode\squareforqed\else{\unskip\nobreak\hfil
\penalty50\hskip1em\null\nobreak\hfil\squareforqed
\parfillskip=0pt\finalhyphendemerits=0\endgraf}\fi}
\def\endenv{\ifmmode\;\else{\unskip\nobreak\hfil
\penalty50\hskip1em\null\nobreak\hfil\;
\parfillskip=0pt\finalhyphendemerits=0\endgraf}\fi}
\newenvironment{proof}{\noindent \textbf{{Proof~} }}{\qed}
\mathchardef\ordinarycolon\mathcode`\:
\def\vcentcolon{\mathrel{\mathop\ordinarycolon}}
\newcommand{\nc}{\newcommand}
\nc{\rnc}{\renewcommand}
\nc{\beq}{\begin{equation}}
\nc{\eeq}{{\end{equation}}}
\nc{\beqa}{\begin{eqnarray}}
\nc{\eeqa}{\end{eqnarray}}
\nc{\lbar}[1]{\overline{#1}}
\nc{\bra}[1]{\langle#1|}
\nc{\ket}[1]{|#1\rangle}
\nc{\ketbra}[2]{|#1\rangle\!\langle#2|}
\nc{\braket}[2]{\langle#1|#2\rangle}
\nc{\proj}[1]{| #1\rangle\!\langle #1 |}
\nc{\avg}[1]{\langle#1\rangle}
\rnc{\max}{\operatorname{max}}
\nc{\Rank}{\operatorname{Rank}}
\nc{\smfrac}[2]{\mbox{$\frac{#1}{#2}$}}
\nc{\Tr}{\operatorname{Tr}}
\nc{\ox}{\otimes}
\nc{\dg}{\dagger}
\nc{\dn}{\downarrow}
\nc{\cA}{{\cal A}}
\nc{\cB}{{\cal B}}
\nc{\cC}{{\cal C}}
\nc{\cD}{{\cal D}}
\nc{\cE}{{\cal E}}
\nc{\cF}{{\cal F}}
\nc{\cG}{{\cal G}}
\nc{\cH}{{\cal H}}
\nc{\cI}{{\cal I}}
\nc{\cJ}{{\cal J}}
\nc{\cK}{{\cal K}}
\nc{\cL}{{\cal L}}
\nc{\cM}{{\cal M}}
\nc{\cN}{{\cal N}}
\nc{\cO}{{\cal O}}
\nc{\cP}{{\cal P}}
\nc{\cR}{{\cal R}}
\nc{\cS}{{\cal S}}
\nc{\cT}{{\cal T}}
\nc{\cX}{{\cal X}}
\nc{\cZ}{{\cal Z}}
\nc{\csupp}{{\operatorname{csupp}}}
\nc{\qsupp}{{\operatorname{qsupp}}}
\nc{\var}{\operatorname{var}}
\nc{\rar}{\rightarrow}
\nc{\lrar}{\longrightarrow}
\nc{\polylog}{\operatorname{polylog}}
\nc{\id}{{\openone}}
\nc{\RR}{{{\mathbb R}}}
\nc{\CC}{{{\mathbb C}}}
\nc{\FF}{{{\mathbb F}}}
\nc{\NN}{{{\mathbb N}}}
\nc{\ZZ}{{{\mathbb Z}}}
\nc{\PP}{{{\mathbb P}}}
\nc{\QQ}{{{\mathbb Q}}}
\nc{\UU}{{{\mathbb U}}}
\nc{\EE}{{{\mathbb E}}}
\nc{\be}{\begin{equation}}
\nc{\ee}{{\end{equation}}}
\nc{\bea}{\begin{eqnarray}}
\nc{\eea}{\end{eqnarray}}
\nc{\Hom}[2]{\mbox{Hom}(\CC^{#1},\CC^{#2})}
\nc{\rU}{\mbox{U}}
\nc{\ob}[1]{#1}
\newcommand{\setA}{\mathcal{A}}
\newcommand{\setB}{\mathcal{B}}
\newcommand{\setT}{\mathcal{T}}
\newcommand{\Real}{\mathbb{R}}
\newcommand{\hil}{\mathcal{H}}
\newcommand{\inp}[2]{\langle{#1}|{#2}\rangle}
\newcommand{\cancel}[1]{}
\begin{document}

\title{Higher entropic uncertainty relations for anti-commuting observables}

\author{Stephanie Wehner}
\affiliation{Centrum voor Wiskunde en Informatica, Kruislaan 413, 1098 SJ Amsterdam, The Netherlands}
\email{s.d.c.wehner@cwi.nl}

\author{Andreas Winter}
\affiliation{Department of Mathematics, University of Bristol, Bristol BS8 1TW, U.K.}
\affiliation{Quantum Information Technology Lab, National University of Singapore,
 2 Science Drive 3, Singapore 117542}
\email{a.j.winter@bris.ac.uk}

\date{3 October 2007}

\maketitle

\noindent
{\bf Uncertainty relations
provide one of the most powerful formulations
of the quantum mechanical principle of complementarity. 
Yet, very little is known about such uncertainty relations for more than two measurements. 
Here, we show that sufficient unbiasedness for a set of
binary observables, in the sense of mutual anti-commutation,
is good enough to obtain maximally strong uncertainty relations in terms
of
the Shannon entropy.
We also prove nearly optimal relations for the collision entropy. This is the first systematic and explicit approach to finding an arbitrary number of measurements for which we obtain maximally strong uncertainty relations.
Our results have immediate applications to quantum cryptography. 
}

\bigskip\noindent
Uncertainty relations lie at the very core of quantum mechanics. 
For any observable, it only has sharp values (in the sense that the measurement
outcome is deterministic) for its own eigenstates.
However, for any other state, the distribution
of measurement outcomes is more or less smeared out, or more conveniently
expressed: its entropy is is strictly positive. 
Hence, if two or more observables have no eigenstates in common,
the sum of these respective entropies is strictly greater than $0$ for any state we may measure. We thereby say that a set of observables is more ``incompatible'' 
than another, if this sum takes on a larger value. But what makes observables more ``incompatible''?
Or rather, what characterizes maximally ``incompatible'' observables? Here, we show how to obtain maximally
strong uncertainty relations for a large number of binary observables that exhibit simple geometrical properties.

Uncertainty relations are most well-known in the form proposed by
Heisenberg~\cite{heisenberg:uncertainty} and generalized by Robertson~\cite{robertson:uncertainty}. Entropic uncertainty
relations are an alternative way to state Heisenberg's uncertainty principle. They are frequently a
more useful characterization, because the ``uncertainty'' is lower bounded by a quantity that only 
depends on the eigenstates of the observables, and not on the actual physical quantity to be
measured~\cite{bm:uncertainty,deutsch:uncertainty}, as in Heisenberg's formulation with standard
deviations -- see also the more recent paper~\cite{GII}.
Following a conjecture by Kraus~\cite{kraus:entropy}, Maassen and Uffink~\cite{maassen:entropy}
proved an entropic uncertainty relation for \emph{two} observables.
In particular, they showed that if we measure any state $\rho \in \hil$ with $\dim \hil = d$ 
using observables with eigenbases $\setA = \{\ket{a_1},\ldots,\ket{a_d}\}$ and 
$\setB = \{\ket{b_1},\ldots,\ket{b_d}\}$
respectively, we have
$$
\frac{1}{2}\bigl( H(\setA|\rho) + H(\setB|\rho) \bigr) \geq - \log c(\setA,\setB),
$$
where $c(\setA,\setB) = \max\{|\inp{a}{b}| : \ket{a} \in \setA, \ket{b} \in \setB\}$ and 
$H(\setA|\rho) = - \sum_{i=1}^d \bra{a_i}\rho\ket{a_i} \log \bra{a_i}\rho\ket{a_i}$
is the Shannon entropy
arising from measuring the state $\rho$ in basis $\setA$. Here, the most ``incompatible'' 
measurements arise from choosing $\setA$ and $\setB$ to be
\emph{mutually unbiased bases} (MUB). That is,
for any $\ket{a} \in \setA$ and any $\ket{b} \in \setB$ we have $|\inp{a}{b}| = 1/\sqrt{d}$, giving
us a lower bound of $\frac{1}{2}\log d$.
Clearly, this bound is tight: Choosing $\rho = \proj{a_i}$ for $\ket{a_i} \in \setA$ gives
us exactly $\frac{1}{2}\log d$, with maximum uncertainty
for one of the two observables and none for the other.

But how about more than two observables? Sadly, very little is known about this case so far. Yet, this question
not only eludes our current understanding of quantum mechanics, but also has practical consequences
for quantum cryptography in the bounded storage model, where proving the security of protocols ultimately
reduces to finding such relations~\cite{serge:new}. Proving new entropic uncertainty relations
could thus give rise to new protocols. Furthermore, uncertainty relations for more than two
measurements could also be useful to understand other quantum effects 
that are derived from such relations,
such as
locking classical information in quantum states~\cite{terhal:locking}.
Sanchez-Ruiz~\cite{sanchez:entropy,sanchez:entropy2,sanchez:entropyD2} 
has shown that for a full set of $d+1$ MUBs $\setA_1,\ldots,\setA_{d+1}$, we have
$$
\frac{1}{d+1} \sum_{j=1}^{d+1} H(\setA_{j}|\rho) \geq  \log\left(\frac{d+1}{2}\right),
$$
and for $d=2$ gave a lower bound of $2/3$.
Indeed, strong uncertainty relations for a smaller number of bases do exist.
If we choose a set $\setT$ of
$(\log d)^4$ bases uniformly at random, then (with high probability)
we have that for all states $\rho$:
$\frac{1}{|\setT|} \sum_{\setB \in \setT} H(\setB|\rho)
  \geq \log d -3$~\cite{winter:randomizing}. 
This means that
there exist $(\log d)^4$ bases for which the sum of entropies is very large, i.e., measurements in such bases
are very incompatible. However, no explicit constructions are known.
It may be tempting to conjecture that simply choosing our measurements to be mutually unbiased
leads to strong uncertainty relations in general. In fact, when choosing bases at random they will be almost mutually unbiased. 
In this case, we might
expect the entropy average to be quite large: if the state to be measured is an eigenstate of one of the
bases, the corresponding entropy average will be $\left( 1- \frac{1}{|\setT|} \right)\log d$.
This value is thus clearly an upper bound on the minimum entropy average
$\min_{\rho} \frac{1}{|\setT|} \sum_{\setB \in \setT} H(\setB|\rho)$ for any
set of bases, mutually unbiased or not.
Perhaps surprisingly, however, choosing the bases to be mutually unbiased is not the right property:
there exists up to $|\setT| \leq \sqrt{d}$ mutually unbiased bases for which
$\min_{\rho} \frac{1}{|\setT|} \sum_{\setB \in \setT} H(\setB|\rho) = \frac{1}{2}\log d$~\cite{wehner06c}.
Note that the right hand side is a lower bound for any set of MUBs, since it is the average of pairs
of entropies to which we can apply the uncertainty relation by Maassen and Uffink~\cite{maassen:entropy}. 
Hence we call this the trivial lower bound. 
When considering entropic uncertainty relations as a measure of ``incompatibility'', we must thus 
look for different properties to obtain strong uncertainty relations. But, what properties
lead to strong entropic uncertainty relations for more than two observables?

Here, we show that for binary observables we obtain maximally strong uncertainty relations
for the Shannon entropy if they satisfy the property that they \emph{anti-commute}. We also obtain
a nearly optimal uncertainty relation for the collision entropy (R{\'e}nyi entropy of order $2$) 
$H_2(X) = - \log \sum_x P_X(x)^2$ that is of particular relevance to cryptography. As we will see,
we can take the anti-commuting observables to have a particularly simple form that in principle allows
us to apply our result to quantum cryptography using present-day technology.

\section{Clifford algebra}
For our result we will make use of the structure of
Clifford algebra~\cite{lounesto:book,doran:lasenby:book,dietz:blochsphere},
which has many beautiful geometrical properties of which we shall use a few.
For any integer $n$, the free real associative algebra generated by
$\Gamma_1,\ldots,\Gamma_{2n}$, subject to the anti-commutation
relations
\begin{equation}
  \label{eq:anti}
  \{ \Gamma_j,\Gamma_k \} = \Gamma_j\Gamma_k + \Gamma_k\Gamma_j = 2\delta_{jk} \1,
\end{equation}
is called \emph{Clifford algebra}. We briefly recall its most
essential properties that we will use in this text. 
The Clifford algebra has a unique representation by
Hermitian matrices on $n$ qubits (up to unitary equivalence) which we fix henceforth.
This representation can be obtained via the famous Jordan-Wigner transformation~\cite{JordanWigner}:
\begin{align*}
  \Gamma_{2j-1} &= Z^{\ox(j-1)} \ox X \ox \1^{\ox(n-j)}, \\
  \Gamma_{2j}   &= Z^{\ox(j-1)} \ox Y \ox \1^{\ox(n-j)},
\end{align*}
for $j=1,\ldots,n$, where we use $X$, $Y$ and $Z$ to denote the Pauli matrices.

Let us first consider these operators themselves. Evidently, 
each operator $\Gamma_i$ has exactly two eigenvalues $\pm 1$:
Let $\ket{\eta}$ be an eigenvector of $\Gamma_i$ with eigenvalue $\lambda$.
From $\Gamma_i^2 = \1$ we have that $\lambda^2 = 1$. Furthermore,
we have $\Gamma_i (\Gamma_j \ket{\eta}) = - \lambda \Gamma_j \ket{\eta}$.
We can therefore express each $\Gamma_i$ as
$$
\Gamma_i = \Gamma_i^0 - \Gamma_i^1,
$$
where $\Gamma_i^0$ and $\Gamma_i^1$ are projectors onto the positive and
negative eigenspace of $\Gamma_i$ respectively. Furthermore, note that
we have for $i\neq j$
$$
\Tr(\Gamma_i \Gamma_j) = \frac{1}{2} \Tr(\Gamma_i \Gamma_j + \Gamma_j \Gamma_i) = 0.
$$
That is, all such operators are orthogonal.
Hence, the positive and negative eigenspaces of such operators
are similarly mutually unbiased than bases can be: we have that
for all $i \neq j$ 
$$
\Tr(\Gamma_i \Gamma_j^0) = \Tr(\Gamma_i \Gamma_j^1).
$$

The crucial aspect of the Clifford algebra that makes it so useful
in geometry is that we can view the operators
$\Gamma_1,\ldots,\Gamma_{2n}$ as $2n$ orthogonal vectors forming a basis
for $\Real^{2n}$. Each vector $a = (a_1,\ldots,a_{2n}) \in \Real^{2n}$ can
then be written as $a = \sum_j a_j \Gamma_j$. Note that the inner product of two
vectors obeys $a \cdot b = \sum_j a_j b_j \1 = \{a,b\}/2$, where $ab$ is the Clifford product which
here is just equal to the matrix product. Hence, anti-commutation takes a geometric
meaning within the algebra: two vectors anti-commute if and only if they are
orthogonal.
Evidently, if we now transform the generating set
of $\Gamma_j$ linearly to obtain the new operators
\[
  \Gamma_k' = \sum_{j} T_{jk}\Gamma_j,
\]
then the set $\{ \Gamma_1',\ldots,\Gamma_{2n}'\}$ satisfies the anti-commutation
relations iff $(T_{jk})_{jk}$ is an orthogonal matrix: these are exactly the operations
which preserve the inner product. Because of the uniqueness of representation,
there exists a matching unitary $U(T)$ of ${\cal H}$ which transforms the
operator basis on the Hilbert space level, by conjugation:
\[
  \Gamma_j' = U(T) \Gamma_j U(T)^\dagger.
\]
Essentially, we can think of the positive and negative eigenspace of such operators 
as the positive and negative direction of the basis vectors. 
We can visualize the $2n$ basis
vectors with the help of a $2n$-dimensional hypercube. 
Each basis vector determines two opposing faces of the hypercube, 
where we can think of the two faces as corresponding to the positive and negative eigenspace of each operator. 
\begin{figure}[ht]
\includegraphics[scale=0.7]{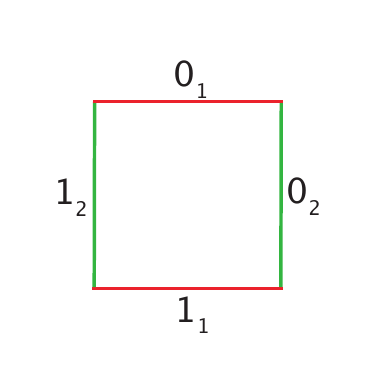}
\caption{$2$-cube, corresponding to $n=1$.}
\end{figure}
Note that the face of an $2n$-dimensional hypercube is a $2n-1$ dimensional hypercube itself.
\begin{figure}[ht]
\includegraphics[scale=0.6]{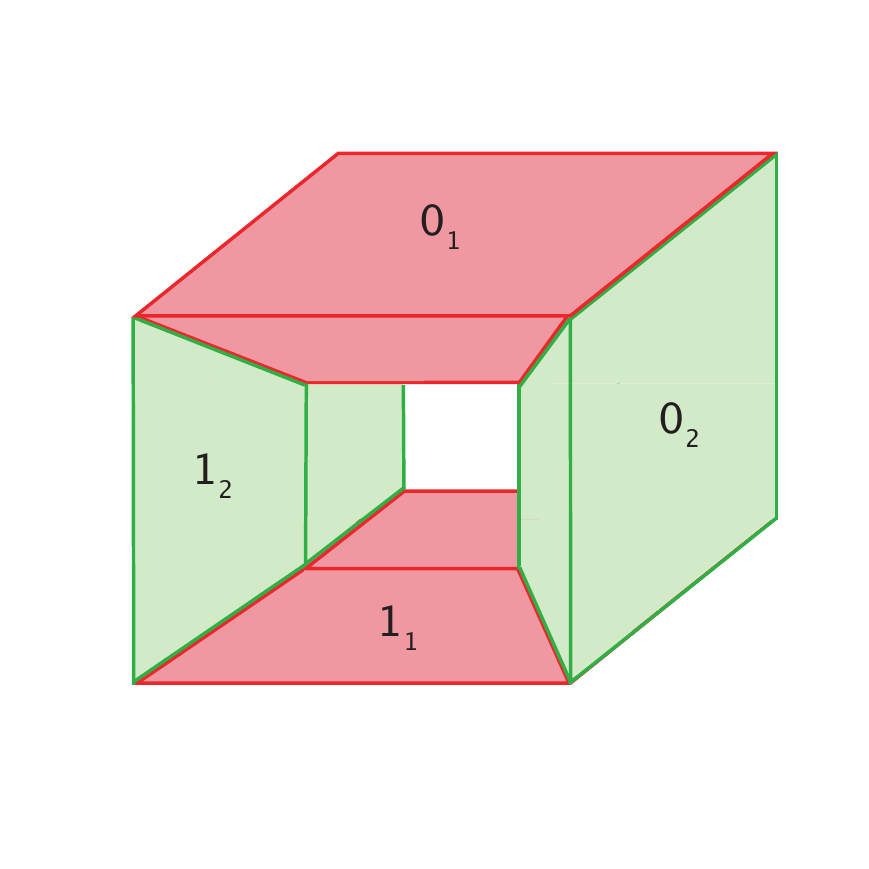}
\caption{$4$-cube, corresponding to $n=2$.}
\end{figure}

It will be particularly useful that the collection of operators
\[\begin{split}
  \1           & \\
  \Gamma_j     & \phantom{===} (1\leq j\leq 2n) \\
  \Gamma_{jk}  &= i\Gamma_j\Gamma_k \ (1\leq j < k \leq 2n) \\
  \Gamma_{jk\ell} &= \Gamma_j\Gamma_k\Gamma_\ell \ (1\leq j < k < \ell \leq 2n) \\
    \vdots     &\\
  \Gamma_{12\ldots (2n)} &= i\Gamma_1\Gamma_2 \cdots \Gamma_{2n} =: \Gamma_0
\end{split}\]
form an orthogonal basis for the $d \times d$ complex matrices for $d=2^n$, again
by the anti-commutation relations. By counting, the above operators
form a complete operator basis with respect to the Hilbert-Schmidt
inner product. Notice that the products with an odd number of factors are
Hermitian, while the ones with an even number of factors are skew-Hermitian,
so in the definition of the above operators we introduce a factor of
$i$ to all with an even number of indices to make the whole set a
real basis for the Hermitian operators. Working out the above terms using
the representation from above, we can see that this gives us the familiar
Pauli basis consisting of elements $B_j^1 \otimes \ldots \otimes B_j^n$
with $B_j^i \in \{\1,X,Y,Z\}$.

Hence we can write every state $\rho$ on ${\cal H}$ as
\begin{equation}
  \label{eq:op-basis}
  \rho = \frac{1}{d}\left( \1 + \sum_j g_j\Gamma_j + \sum_{j<k} g_{jk}\Gamma_{jk}
                              + \ldots + g_0 \Gamma_0 \right).
\end{equation}
This expansion has been used before in quantum information theory,
see e.g.~\cite{dietz:blochsphere}.
The (real valued) coefficients $(g_1,\ldots,g_{2n})$ in this
expansion are called ``vector'' components,
the ones belonging to degree $k > 1$ products of $\Gamma$'s are
``tensor'' or $k$-vector components. $k$-vectors also have very nice geometric
interpretation within the algebra: they represent oriented plane and higher
volume elements.
The -- unique -- coordinate $\Gamma_0$ of degree $2n$ also plays special role
(it corresponds to the volume element in $\Real^{2n}$), and is called the
``pseudo-scalar'' component. Note that it anti-commutes with all the $\Gamma_j$,
which has another important consequence:
Substituting $\Gamma_0$ for any of the $\Gamma_j$ again yields
a generating set of the Clifford algebra, hence there exists a unitary on
${\cal H}$ taking the original to the new basis by conjugation.

The vector and pseudo-scalar components of the Clifford algebra span a
$(2n+1)$-dimensional space isomorphic to $\Real^{2n+1}$: indeed,
extending the $\text{O}(2n)$ symmetry of $\text{span}\{\Gamma_1,\ldots,\Gamma_{2n}\}$,
the extended $\text{span}\{\Gamma_0,\Gamma_1,\ldots,\Gamma_{2n}\}$ has
the symmetry of $\text{SO}(2n+1)$: for every special-orthogonal
$(2n+1) \times (2n+1)$ matrix $\tilde{T}$, we can write transformed Clifford operators
$\Gamma_k' = \sum_{j=0}^{2n} \tilde{T}_{jk} \Gamma_j$ obeying the anti-commutation
relations. As before (but now this requires an additional proof that we provide in the appendix using
the condition $\det \tilde{T} = 1$), there exists a unitary $U(\tilde{T})$ of the
underlying Hilbert space ${\cal H}$ such that for all $j=0,\ldots,2n$,
$\Gamma_j' = U(\tilde{T}) \Gamma_j U(\tilde{T})^\dagger$.

Using the orthogonal group symmetry of the Clifford algebra,
we show the following lemma in the appendix.
\begin{lemma}\label{reduceGrades}
The linear map $\mathbb{P}$ taking $\rho$ as
in eq.~(\ref{eq:op-basis}) to
\begin{equation}
  \label{eq:projection}
  \mathbb{P}(\rho) := \frac{1}{d}\left( \1 + \sum_{j=0}^{2n} g_j\Gamma_j \right)
\end{equation}
is positive. I.e., if $\rho$ is a state, then so is $\mathbb{P}(\rho)$,
and in this case $\sum_{j=0}^{2n} g_j^2 \leq 1$.
Conversely, if $\sum_{j=0}^{2n} g_j^2 \leq 1$, then
\[
  \sigma = \frac{1}{d}\left( \1 + \sum_{j=0}^{2n} g_j\Gamma_j \right)
\]
is positive semidefinite, hence a state.
\end{lemma}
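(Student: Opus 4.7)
My plan is to reduce the lemma to the single algebraic observation that $\Gamma_0,\Gamma_1,\ldots,\Gamma_{2n}$ are $2n+1$ mutually anticommuting, traceless, Hermitian involutions. All four properties are already in place from the preceding discussion: relations (\ref{eq:anti}) give $\{\Gamma_j,\Gamma_k\}=2\delta_{jk}\1$ for $j,k\ge 1$; the preceding text records that $\Gamma_0$ anticommutes with every other $\Gamma_j$; the phase in the definition of $\Gamma_0$ is chosen so that $\Gamma_0=\Gamma_0^\dagger$ and $\Gamma_0^2=\1$; and every $\Gamma_j$ is traceless since its $\pm 1$ eigenspaces are put into bijection by any anticommuting partner, exactly as noted at the beginning of the Clifford-algebra section.

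From this I would extract the key geometric fact: for every real unit vector $a=(a_0,\ldots,a_{2n})\in\RR^{2n+1}$, the operator $\Gamma(a):=\sum_{j=0}^{2n}a_j\Gamma_j$ satisfies
\[
  \Gamma(a)^2=\sum_{j}a_j^2\,\1+\sum_{j<k}a_ja_k\{\Gamma_j,\Gamma_k\}=\1,
\]
so it is a Hermitian involution with spectrum contained in $\{-1,+1\}$ and operator norm $1$. Moreover, by the Hilbert--Schmidt orthogonality of the Clifford basis, $\Tr(\rho\,\Gamma_j)=g_j$ for each $j\in\{0,\ldots,2n\}$, and hence $\Tr(\rho\,\Gamma(a))=a\cdot g$, where $g:=(g_0,\ldots,g_{2n})$.

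Both halves of the lemma then follow in a line. If $\rho$ is a state, $|a\cdot g|=|\Tr(\rho\,\Gamma(a))|\le 1$ for every unit $a$, and taking $a=g/|g|$ (the case $g=0$ being trivial) yields $\sum_{j=0}^{2n}g_j^2=|g|^2\le 1$. Conversely, given $\sum g_j^2\le 1$, I would write $g=|g|\,a$ with $|a|=1$ and check that
\[
  \sigma=\frac{1}{d}\bigl(\1+|g|\,\Gamma(a)\bigr)
\]
has eigenvalues $(1\pm|g|)/d\ge 0$, so $\sigma\succeq 0$, and $\Tr\sigma=1$ because each $\Gamma_j$ is traceless, making $\sigma$ a state. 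Combining the two directions: for any input state $\rho$ the resulting coefficients $g_0,\ldots,g_{2n}$ obey $\sum g_j^2\le 1$, and therefore $\mathbb{P}(\rho)$ itself is a state, which is the positivity claim.

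The one genuinely delicate step is the very first one, namely justifying that $\Gamma_0$ really does sit on equal footing with $\Gamma_1,\ldots,\Gamma_{2n}$ as an anticommuting partner. This is precisely the input used to obtain the $\mathrm{SO}(2n+1)$-symmetric picture developed just before the lemma, and once it is granted no averaging over the group or isotypic decomposition of $M_d(\CC)$ is needed; the proof collapses to the one-line eigenvalue computation for $\Gamma(a)$.
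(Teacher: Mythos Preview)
Your argument is correct and complete. It differs from the paper's proof in a meaningful way, so a brief comparison is in order.

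The paper establishes positivity of $\mathbb{P}$ \emph{constructively}: using the $\mathrm{SO}(2n+1)$ symmetry it first rotates the vector $(g_0,\ldots,g_{2n})$ onto a single axis via a unitary $U(T)$, then kills all higher-grade components by averaging over the sign-flip unitaries $U(F_j)=\Gamma_0\Gamma_j$. This exhibits $\mathbb{P}(\rho)$ as a convex combination of unitary conjugates of $\rho$, from which positivity is immediate; the bound $\sum_j g_j^2\le 1$ is then read off from the resulting one-generator state $\hat\rho=\frac{1}{d}(\1+g_1'\Gamma_1)$. The converse direction in the paper is exactly your computation $A^2=\bigl(\sum_j g_j^2\bigr)\1$.

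You reverse the logic. You prove $\sum_j g_j^2\le 1$ directly via the operator-norm (H\"older) bound $|\Tr(\rho\,\Gamma(a))|\le 1$ for the Hermitian involution $\Gamma(a)$, and then invoke the converse to conclude that $\mathbb{P}(\rho)$ is a state. This is strictly more elementary: it uses only the anticommutation relations and avoids the Euler-angle machinery and the explicit construction of $U(T)$ and $U(F_j)$. What the paper's route buys is the stronger structural statement that $\mathbb{P}$ is in fact a random-unitary channel, not merely a positive map---information which your argument does not yield but which the lemma does not require.
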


\noindent
It is interesting to note that the map $\mathbb{P}$ is positive, 
but not \emph{completely positive}, for any $n>1$, as one can see
straightforwardly by looking at it's Choi-Jamio\l{}kowski operator.

\section{Applications}
We now first use the tools from above to prove a ``meta''-uncertainty relation, from which
we will then derive two new entropic uncertainty relations.
Evidently, we have immediately from the above that
\begin{lemma}\label{metaUR}
Let $\rho \in \hil$ with $\dim\hil = 2^n$ be a quantum state,
and consider $K \leq 2n+1$ anti-commuting observables $\Gamma_j$ as defined above.
Then,
$$
  \sum_{j=0}^{K-1} \bigl( \Tr(\rho\Gamma_j) \bigr)^2
     \leq \sum_{j=0}^{2n} \bigl( \Tr(\rho\Gamma_j) \bigr)^2
     \leq 1.
$$
\qed
\end{lemma}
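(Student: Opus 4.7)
The statement decomposes into two inequalities, both of which reduce straightforwardly to material already developed in the excerpt. The first inequality is trivial: the omitted terms $(\Tr(\rho\Gamma_j))^2$ for $K \leq j \leq 2n$ are nonnegative, so dropping them can only decrease the sum. The content is entirely in the second inequality.

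The plan for the second inequality is to identify the numbers $\Tr(\rho\Gamma_j)$ with the vector/pseudo-scalar coefficients $g_j$ appearing in the Clifford expansion (\ref{eq:op-basis}), and then invoke Lemma~\ref{reduceGrades}. Concretely, I would first recall that the Clifford products
$\{\1, \Gamma_j, \Gamma_{jk}, \ldots, \Gamma_0\}$ form a Hilbert--Schmidt orthogonal basis, with each nonidentity basis element squaring to $\1$ and being traceless (the latter because any such product anti-commutes with at least one $\Gamma_k$, forcing its trace to vanish, as already noted in the excerpt for the generators themselves). Taking the Hilbert--Schmidt inner product of (\ref{eq:op-basis}) with $\Gamma_j$ for any $j\in\{0,1,\ldots,2n\}$ therefore isolates the coefficient and yields $\Tr(\rho\,\Gamma_j) = g_j$.

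Once this identification is in place, the inequality is immediate: by Lemma~\ref{reduceGrades}, the state $\rho$ is mapped by $\mathbb{P}$ to the operator
\[
  \mathbb{P}(\rho) = \frac{1}{d}\Bigl( \1 + \sum_{j=0}^{2n} g_j \Gamma_j \Bigr),
\]
which is positive semidefinite, and the same lemma asserts that this forces $\sum_{j=0}^{2n} g_j^2 \leq 1$. Substituting $g_j = \Tr(\rho\Gamma_j)$ gives exactly the second inequality of the lemma.

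There is no real obstacle here, since the substantive work has been packaged into Lemma~\ref{reduceGrades}. The only small care point is to make sure the identification $g_j = \Tr(\rho\Gamma_j)$ is justified uniformly for $j=0$ and $j=1,\ldots,2n$, which is handled by the observation that $\Gamma_0 = i\Gamma_1\Gamma_2\cdots\Gamma_{2n}$ is itself Hermitian, squares to $\1$, is traceless, and is Hilbert--Schmidt orthogonal to every other element of the Clifford basis --- exactly the properties used for the $\Gamma_j$ with $j\geq 1$. With that remark the proof is a one-line application of the preceding lemma.
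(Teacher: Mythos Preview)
Your proposal is correct and matches the paper's approach exactly: the paper states the lemma with a bare \qed\ after the phrase ``Evidently, we have immediately from the above,'' treating it as an immediate consequence of Lemma~\ref{reduceGrades} via the identification $g_j = \Tr(\rho\Gamma_j)$. You have simply spelled out what the paper leaves implicit.
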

Our result is essentially a generalization of the Bloch sphere picture to higher dimensions
(see also~\cite{dietz:blochsphere}):
For $n=1$ ($d=2$) the state is parametrized by 
$\rho = \frac{1}{2}(\1 + g_1 \Gamma_1 + g_2 \Gamma_2 + g_0 \Gamma_0)$
where $\Gamma_1 = X$, $\Gamma_2 = Z$ and $\Gamma_0 = Y$ are the familiar Pauli matrices. 
Lemma~\ref{metaUR} tells
us that $g_0^2 + g_1^2 + g_2^2 \leq 1$, i.e., the state must lie inside the Bloch sphere.
Our result may be of independent interest, since it is often hard to find conditions on the coefficients
$g_1,g_2,\ldots$ such that $\rho$ is a state.

Notice that the $g_j = \Tr(\rho\Gamma_j)$ are directly interpreted
as the expectations of the observables $\Gamma_j$. Indeed, $g_j$ is
precisely the bias of the $\pm1$-variable $\Gamma_j$:
\[
  \Pr\{ \Gamma_j = 1 | \rho \} = \frac{1+g_j}{2}.
\]
Hence, we can interpret Lemma~\ref{metaUR} as a form of uncertainty relation
between the observables $\Gamma_j$: if one or more of the observables have a large
bias (i.e., they are more precisely defined), this limits the bias of
the other oberservables (i.e., they are closer to uniformly distributed).

Indeed, Lemma~\ref{metaUR} has strong consequences for the R\'{e}nyi and von Neumann
entropic averages
$$
  \frac{1}{K} \sum_{j=0}^{K-1} H_\alpha\left(\Gamma_j|\rho\right),
$$
where $H_\alpha(\Gamma_j|\rho)$ is the R{\'e}nyi entropy at $\alpha$
of the probability distribution arising from measuring the state $\rho$ with
observable $\Gamma_j$. 
The minima of such expressions
can be interpreted as giving entropic uncertainty relations, as we shall now do
for $\alpha=2$ (the collision entropy) and $\alpha=1$ (the Shannon entropy).

\begin{theorem}
Let $\dim\hil = 2^n$,
and consider $K \leq 2n+1$ anti-commuting observables as defined above.
Then,
$$
\min_{\rho} \frac{1}{K} \sum_{j=0}^{K-1} H_2\left(\Gamma_j|\rho\right)
      = 1 - \log\left( 1+\frac{1}{K} \right)
      \sim 1 - \frac{\log e}{K},
$$
where $H_2(\Gamma_j|\rho) = - \log \sum_{b \in \{0,1\}} \Tr(\Gamma_j^b \rho)^2$,
and the minimization is taken over all states $\rho$.
The latter holds asymptotically for large $K$.
\end{theorem}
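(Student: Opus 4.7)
The plan is to reduce the entropic expression to a simple optimization over the expectation values $g_j = \Tr(\rho\Gamma_j)$ and then combine Lemma~\ref{metaUR} with a one-step Jensen inequality.

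First I would compute $H_2(\Gamma_j|\rho)$ explicitly. Since $\Gamma_j$ has eigenvalues $\pm 1$ with outcome probabilities $(1\pm g_j)/2$ under $\rho$, we get
\[
  \sum_{b\in\{0,1\}} \Tr(\Gamma_j^b\rho)^2 = \frac{(1+g_j)^2 + (1-g_j)^2}{4} = \frac{1+g_j^2}{2},
\]
and hence $H_2(\Gamma_j|\rho) = 1 - \log(1+g_j^2)$. So the quantity to minimize is $1 - \frac{1}{K}\sum_{j=0}^{K-1}\log(1+g_j^2)$, which amounts to maximizing $\frac{1}{K}\sum_j \log(1+g_j^2)$ subject to the constraints on the $g_j$.

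For the lower bound on the minimum, I would observe that $x\mapsto\log(1+x)$ is concave on $[0,\infty)$, so Jensen's inequality followed by Lemma~\ref{metaUR} gives
\[
  \frac{1}{K}\sum_{j=0}^{K-1}\log(1+g_j^2) \leq \log\!\left(1 + \frac{1}{K}\sum_{j=0}^{K-1} g_j^2\right) \leq \log\!\left(1+\frac{1}{K}\right),
\]
yielding $\frac{1}{K}\sum_j H_2(\Gamma_j|\rho) \geq 1 - \log(1 + 1/K)$ for every state $\rho$. For the matching achievability I would exhibit an explicit minimizer. Both inequalities above are saturated precisely when $g_0^2 = g_1^2 = \cdots = g_{K-1}^2 = 1/K$, in which case $\sum_j g_j^2 = 1$. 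By the converse direction of Lemma~\ref{reduceGrades}, the operator
\[
  \rho^\ast = \frac{1}{d}\left(\1 + \frac{1}{\sqrt{K}}\sum_{j=0}^{K-1} \Gamma_j\right)
\]
is then a legitimate density matrix, and plugging it in attains the bound. The asymptotic expression $1 - (\log e)/K$ follows from the Taylor expansion $\log(1+1/K) = (\log e)/K + O(1/K^2)$.

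I do not anticipate any real obstacle: the only point to verify is that the putative optimizer actually corresponds to a physical state, which is exactly what the converse half of Lemma~\ref{reduceGrades} guarantees. Everything else is a direct substitution together with the concavity of $\log$.
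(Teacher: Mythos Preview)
Your argument is correct and essentially identical to the paper's: both compute $H_2(\Gamma_j|\rho)=1-\log(1+g_j^2)$, apply Jensen's inequality for the concave function $\log$ together with Lemma~\ref{metaUR} to get the lower bound, and then invoke the converse part of Lemma~\ref{reduceGrades} to exhibit the equal-bias state achieving it. The only cosmetic difference is that you first simplify each $H_2$ term before averaging, whereas the paper keeps the $-\log$ outside and applies Jensen directly to the average.
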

\begin{proof}
Using the fact that $\Gamma_j^b = (\1 + (-1)^b \Gamma_j)/2$ we
can first rewrite
\begin{equation*}\begin{split}
  \frac{1}{K} \sum_{j=0}^{K-1} H_2\left(\Gamma_j|\rho\right)
      &=    - \frac{1}{K} \sum_{j=0}^{K-1} \log \left[\frac{1}{2}\left(1 + \Tr(\rho\Gamma_j)^2\right)\right]  \\
      &\geq - \log \left( \frac{1}{2K} \sum_{j=0}^{K-1} \left(1 + g_j^2\right) \right) \\
      &\geq 1 - \log\left(1 + \frac{1}{K}\right),
\end{split}\end{equation*}
where the first inequality follows from Jensen's inequality and the concavity of the log,
and the second from Lemma~\ref{metaUR}.
Clearly, the minimum is attained if
all $g_j = \Tr(\rho\Gamma_j) = \sqrt{\frac{1}{K}}$.
It follows from Lemma~\ref{reduceGrades} that our inequality is tight.
Via the Taylor expansion of $\log\left(1 + \frac{1}{K}\right)$ we obtain
the asymptotic result for large $K$.
\end{proof}

For the Shannon entropy ($\alpha=1$) we obtain something even nicer:
\begin{theorem}
Let $\dim\hil = 2^n$,
and consider $K \leq 2n+1$ anti-commuting observables as defined above.
Then,
$$
\min_{\rho} \frac{1}{K} \sum_{j=0}^{K-1} H(\Gamma_j|\rho) = 1 - \frac{1}{K},
$$
where $H(\Gamma_j|\rho) = - \sum_{b\in \{0,1\}} \Tr(\Gamma_j^b \rho) \log \Tr(\Gamma_j^b \rho)$,
and the minimization is taken over all states $\rho$.
\end{theorem}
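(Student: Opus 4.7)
The plan is to reduce the problem to a one-variable inequality on the binary Shannon entropy, with Lemma~\ref{metaUR} supplying the global constraint. Writing $g_j := \Tr(\rho\Gamma_j) \in [-1,1]$, the $\pm 1$ outcomes of $\Gamma_j$ on $\rho$ occur with probabilities $\tfrac{1}{2}(1 \pm g_j)$, so
\[
  H(\Gamma_j|\rho) \;=\; h\!\Bigl(\tfrac{1+g_j}{2}\Bigr),
  \qquad h(p) := -p\log p - (1-p)\log(1-p).
\]
Lemma~\ref{metaUR} supplies the quadratic side constraint $\sum_{j=0}^{K-1} g_j^2 \le 1$.

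The whole theorem reduces to the pointwise bound
\[
  h\!\Bigl(\tfrac{1+g}{2}\Bigr) \;\ge\; 1 - g^2 \qquad (g \in [-1,1]).
\]
Granted this, summing over $j$ and applying Lemma~\ref{metaUR} gives $\sum_{j=0}^{K-1} H(\Gamma_j|\rho) \ge K - \sum_j g_j^2 \ge K - 1$, and dividing by $K$ yields the advertised lower bound $1 - 1/K$.

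To prove the scalar inequality I would use the Taylor expansion of $h$ around $p = 1/2$: a direct computation yields
\[
  1 - h\!\Bigl(\tfrac{1+g}{2}\Bigr) \;=\; \frac{1}{\ln 2}\sum_{k=1}^\infty \frac{g^{2k}}{2k(2k-1)}.
\]
For $|g| \le 1$ we have $g^{2k} \le g^2$ termwise, and the telescoping identity $\sum_{k\ge 1}\frac{1}{2k(2k-1)} = \sum_{k\ge 1}\bigl(\frac{1}{2k-1} - \frac{1}{2k}\bigr) = \ln 2$ then gives $1 - h((1+g)/2) \le g^2$ exactly. An equivalent route is to verify that $\phi(q) := h((1+\sqrt q)/2)$ is concave on $[0,1]$ with $\phi(0) = 1$ and $\phi(1) = 0$, whence $\phi(q) \ge 1 - q$. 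I expect this scalar estimate --- a quadratic lower bound on the binary entropy that is tight simultaneously at $g = 0$ and at $g = \pm 1$ --- to be the only genuine technical point; after that, the entire argument is additive.

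Tightness follows from the second half of Lemma~\ref{reduceGrades}: the coefficients $g_0 = 1$, $g_1 = \cdots = g_{K-1} = 0$ satisfy $\sum_j g_j^2 = 1$, so $\sigma := \frac{1}{d}(\1 + \Gamma_0)$ is a bona fide state. On this state $H(\Gamma_0|\sigma) = h(1) = 0$ while $H(\Gamma_j|\sigma) = h(1/2) = 1$ for $j \ge 1$, so the average is exactly $1 - 1/K$, matching the lower bound.
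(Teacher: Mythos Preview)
Your proof is correct and shares the same overall reduction as the paper: write $H(\Gamma_j|\rho)=h\bigl(\tfrac{1+g_j}{2}\bigr)$, apply the constraint $\sum_j g_j^2\le 1$ from Lemma~\ref{metaUR}, and invoke Lemma~\ref{reduceGrades} for tightness. The difference is in how the scalar step is handled. The paper sets $t_j=g_j^2$, proves that $f(t)=h\bigl(\tfrac{1+\sqrt{t}}{2}\bigr)$ is concave on $[0,1]$ via a somewhat laborious second-derivative computation (relegated to the appendix), and then argues that the concave objective $\sum_j f(t_j)$ is minimized at an extreme point of the simplex. You instead prove the pointwise linear lower bound $f(t)\ge 1-t$ directly from the Taylor expansion $1-h\bigl(\tfrac{1+g}{2}\bigr)=\tfrac{1}{\ln 2}\sum_{k\ge 1}\tfrac{g^{2k}}{2k(2k-1)}$ together with $\sum_{k\ge 1}\tfrac{1}{2k(2k-1)}=\ln 2$, and then simply sum. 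Your inequality is exactly the secant-line bound that concavity of $f$ would give, so the two arguments are logically close; but your route is more elementary and self-contained, avoiding the derivative analysis entirely, while the paper's concavity statement is slightly stronger information about $f$ (useful if one wanted the full minimizer structure rather than just the value).
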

\begin{proof}
To see this, note that by rewriting our objective as above,
we observe that we need to minimize the expression
\[
  \frac{1}{K} \sum_{j=0}^{K-1} H\left( \frac{1 \pm \sqrt{t_j}}{2} \right),
\]
subject to $\sum_j t_j \leq 1$ and $t_j \geq 0$, via the identification
$t_j = (\Tr(\rho\Gamma_j))^2$. An elementary calculation (included in the
appendix for completeness) shows that
the function $f(t) = H\left( \frac{1 \pm \sqrt{t}}{2} \right)$ is
concave in $t\in[0;1]$. Hence, by Jensen's inequality (read in the
opposite direction), the minimum
is attained with all the $t_j$ being extremal, i.e.~one of the $t_j$
is $1$ and the others are $0$,
giving just the lower bound of $1-\frac{1}{K}$.
\end{proof}

\medskip
It is clear that based on Lemma~\ref{reduceGrades} one can
derive similar uncertainty relations for other R\'{e}nyi entropies
($\alpha \neq 1,2$) by performing the analogous optimization. 
We stuck to the two values above as they are
the most relevant in view of the existing literature; for example,
using the same convexity arguments as for $\alpha=2$, we obtain
for $\alpha = \infty$,
\[
  \frac{1}{K} \sum_{j=0}^{K-1} H_\infty\left(\Gamma_j|\rho\right)  \geq 1 - \log\left( 1+\frac{1}{\sqrt{K}} \right).
\]
This should be compared to Deutsch's inequality~\cite{deutsch:uncertainty}
for the case of two mutually unbiased
bases of a qubit, because the latter really is about $H_\infty$.

\section{Discussion}
We have shown that anti-commuting Clifford observables obey the
strongest possible uncertainty relation for the von Neumann entropy.
It is interesting that in the process of the proof, however,
we have found three uncertainty type inequalities
(the sum of squares bound, the bound on $H_2$, and finally the bound on
$H_1$), and all three have a different structure of attaining the
limit. The sum of squares bound can be achieved in every direction
(meaning for every tuple satisfying the bound we get one attaining
it by multiplying all components by some appropriate factor),
the $H_2$ expression requires all components to be equal,
while the $H_1$ expression demands exactly the opposite.

Our result for the collision entropy is slightly suboptimal but
strong enough for all cryptographic purposes. Indeed, one could use our
entropic uncertainty relation
in the bounded quantum storage setting to construct, for instance, 
$1$-out-of-$K$ oblivious transfer protocols
analogous to~\cite{serge:new}. Here, instead of encoding a single 
bit into either the computational or Hadamard
basis, which gives us a 1-out-of-2 oblivious transfer, we now encode a single bit into the 
positive or negative eigenspace of each
of these $K$ operators. It is clear from the representation of such 
operators discussed earlier, that such
an encoding can be done experimentally as easily as encoding a single
bit into three mutually unbiased basis
given by the Pauli operators $X$, $Y$ and $Z$. Indeed, our construction 
can be seen as a direct extension of such
an encoding: we obtain the uncertainty relations for these three MUBs used in~\cite{serge:new},
previously proved by Sanchez-Ruiz~\cite{sanchez:entropy,sanchez:entropy2},
as a special case of our analysis for $K=3$ ($d=2$).

Alas, strong uncertainty relations for measurements with more than two outcomes
remain inaccessible to us.
It has been shown~\cite{serge:personal} that uncertainty 
relations for more outcomes can be obtained via
a coding argument from uncertainty relations as we construct 
them here. Yet, these seem far from optimal.
A natural choice would be to consider the generators of a generalized
Clifford algebra~\cite{Morris:gen-clifford-1,Morris:gen-clifford-2},
yet this algebra does not have the nice symmetry properties which 
enabled us to implement operations on the
vector components above. It remains an exciting open question, 
whether such operators form a good
generalization, or whether we must continue our search for new properties.

\bigskip\noindent
{\bf Acknowledgments.}
The authors acknowledge support by the EC project ``QAP'' (IST-2005-015848).
SW was additionally supported by the NWO vici project 2004-2009.
AW was additionally supported by the U.K.~EPSRC via the ``IRC QIP''
and an Advanced Research Fellowship.
SW thanks Andrew Doherty for an explanation of the Jordan-Wigner transform.

\appendix
\section{Appendix}

\noindent
{\bf SO(2n+1) structure.} While the orthogonal group symmetry of the
``vector'' component of the Clifford algebra, spanned by the generators
$\{\Gamma_1,\ldots,\Gamma_{2n}\}$, is usually covered in textbook
accounts, the symmetry of the extended set
$\{\Gamma_0,\Gamma_1,\ldots,\Gamma_{2n}\}$, including the pseudo-scalar
element, seems much less well-known. It is quite natural to consider this
set as all its elements mutually anti-commute, so any family
${\cal K} = (k_1,\ldots, k_{2n})$ of $2n$ pairwise distinct
elements will generate the full Clifford algebra. Hence there exists
a unitary $U({\cal K})$ mapping the original generators $\Gamma_j$
to the $\Gamma_{k_j}$:
\[
  \Gamma_{k_j} = U({\cal K}) \Gamma_j U({\cal K})^\dagger.
\]

The initial observation is that indeed an orthogonal transformation $T$
of the $2n$ generators extends to a special-orthogonal transformation
$\tilde{T} = (\det T) \oplus T$ of the extended set, since
\[
  \Gamma_0' = U(T) \Gamma_0 U(T)^\dagger 
            = i \Gamma_1' \cdots \Gamma_{2n}'
            = (\det T) \Gamma_0.
\]
A nice and easy geometrical way of seeing this is via the higher-dimensional
analogue of the well-known Euler angle parametrisation of orthogonal
matrices (see~\cite{goldstein:mechanics}):

\medskip\noindent
{\it Euler Angle Decomposition~\cite{HoffmanRaffenettiRuedenberg:euler-angles}.}
\emph{%
  Let $T$ be an $N\times N$ orthogonal matrix. Then there exist angles
  $\theta_{jk} \in [0;2\pi)$ for $1\leq j < k \leq N$, such that
  \[
    T = E_1^{\det T} \prod_{j<k} R_{jk}(\theta_{jk}),
  \]
  where $E_1^\epsilon = \epsilon \proj{1} + \sum_{i>1} \proj{j}$ is
  either the identity or the reflection along the first coordinate axis,
  and $R_{jk}(\theta)$ is the rotation by angle $\theta$ in the plane
  spanned by the $j$th and $k$th coordinate axes, i.e.
  \[\begin{split}
    R_{jk}(\theta) &= \cos\theta \proj{j} + \sin\theta\ket{k}\bra{j} \\
                   &\phantom{=}
                      - \sin\theta\ket{j}\bra{k} + \cos\theta\proj{k} 
                    + \sum_{i\neq j,k} \proj{i}.
  \end{split}\]
  (The product is to be taken in some fixed order of the indices, say
  lexicographically.)
  \qed %
}

\medskip
With this, we only have to understand how $\Gamma_0$ transforms under the action
of the elementary transformations $E_1^\epsilon$ and $R_{jk}(\theta)$. Clearly,
under the former,
\[
  \Gamma_0' = \epsilon \Gamma_0,
\]
while for the latter (using the abbreviations $c=\cos\theta$ and $s=\sin\theta$),
\[\begin{split}
  \Gamma_0' &= i\Gamma_1' \cdots \Gamma_{2n}'                \\
            &= i\Gamma_1 \cdots \Gamma_{j-1} \cdot               \\
            &\phantom{===}
               (c\Gamma_j + s\Gamma_k) \Gamma_{j+1} \cdots \Gamma_{k-1} (-s\Gamma_j + c\Gamma_k) \cdot \\
            &\phantom{======}
                \Gamma_{k+1} \cdots \Gamma_{2n}                                                  \\
            &= i(c^2+s^2)\Gamma_1 \cdots \Gamma_{2n}         \\
            &\phantom{=}
               + i(-cs+sc) \Gamma_1 \cdots \Gamma_{j-1}
                           \Gamma_{j+1} \cdots \Gamma_{k-1}
                           \Gamma_{k+1} \cdots \Gamma_{2n}   \\
            &= \Gamma_0.
\end{split}\]

Now, for a general special-orthogonal transformation $\tilde{T}$ of the
$2n+1$ coordinates of the extended set, the Euler angle decomposition
gives
\[
  \tilde{T} = \prod_{0\leq j<k \leq 2n} R_{jk}(\theta_{jk}).
\]
Then, the unitary representation $U(\tilde{T})$ clearly has to be the product
of terms $U\bigl(R_{jk}(\theta)\bigr)$. For $1\leq j < k \leq 2n$ we know
already what these are, as the transformation is only one of the
generating set $\{\Gamma_1,\ldots,\Gamma_{2n}\}$ (and by the above observation
the pseudo-scalar $\Gamma_0$ is indeed left alone, as required);
for $0=j < k \leq 2n$ on the other hand, we first map the generating
set ${\cal K}=\{\Gamma_0,\Gamma_k.\ldots\}$ to $\{\Gamma_1,\ldots,\Gamma_{2n}\}$
by the unitary $U({\cal K})^\dagger$, then apply the unitary belonging to $R_{12}(\theta)$
and then map the generators back via $U({\cal K})$. This clearly
implements
\[
  U\bigl(R_{jk}(\theta)\bigr) = U({\cal K}) U\bigl(R_{12}(\theta)\bigr)  U({\cal K})^\dagger,
\]
and we are done.
\qed

\bigskip\noindent
{\bf Proof of Lemma 1.}
First, we show that there exists a unitary $U$ such that
$\rho' = U\rho U^\dagger$ has no pseudo-scalar, and only one nonzero vector component,
say at $\Gamma_1$, which we can choose to be $g_1' = \sqrt{\sum_{j=0}^{2n} g_j^2}$.
Indeed, there is a special-orthogonal transformation $T^{-1}$ of the
coefficient vector $(g_0,g_1,\ldots,g_{2n})$ to a vector whose zeroeth
as well as second till last components are all $0$: since the length is preserved, this
is consistent with the first component becoming $\sqrt{\sum_j g_j^2}$.

Now, let $U = U(T)$ be the corresponding unitary of the Hilbert space.
By the above-mentioned representation of $\text{SO}(2n+1)$ on $\hil$,
we arrive at a new, simpler looking state
\[\begin{split}
  \rho' &= U(T) \rho U(T)^\dagger \\
        &= \frac{1}{d}\left( \1 + g_1'\Gamma_1 + \sum_{j<k} g'_{jk}\Gamma_{jk}
                                                   + \ldots + 0\,\Gamma_0 \right),
\end{split}\]
for some $g'_{jk}$, etc.

There exist of course orthogonal transformations $F_j$ that take
$\Gamma_k$ to $(-1)^{\delta_{jk}}\Gamma_k$. Such transformations
flip the sign of a chosen Clifford generator. They can be extended to a special
orthogonal transformation of $\text{span}\{\Gamma_0,\ldots,\Gamma_{2n}\}$ by
also flipping the sign of $\Gamma_0$: $F_j\Gamma_0 = -\Gamma_0$.
(Using the geometry of the Clifford algebra it is easy to see
that $U(F_j) = \Gamma_0 \Gamma_j$ fulfills this task.)
Now, consider
$$
  \rho'' = \frac{1}{2}\rho' + \frac{1}{2}U(F_j) \rho' U(F_j)^\dagger,
$$
for $j>1$.

Clearly, if $\rho'$ were a state, then the new operator $\rho''$ would also be a
state.
We claim that $\rho'$ has no terms with an index $j$ in its Clifford
basis expansion:
Note that if we flip the sign of precisely those
terms that have an index $j$ (i.e., they have a factor $\Gamma_j$
in the definition of the operator basis), and then the
coefficients cancel with those of $\rho'$. 

We now iterate this map through $j=2,3,\ldots, 2n$, and
we are left with a final state $\hat{\rho}$, which hence
must be of the form
\[
  \hat{\rho} = \frac{1}{d}\left( \1 + g_1'\Gamma_1 \right).
\]
By applying $U(T)^\dagger$ from above, we now transform $\hat{\rho}$ to
$U(T)^\dagger \hat{\rho} U(T) = \mathbb{P}(\rho)$, which is the first
part of the lemma.

Looking at $\hat{\rho}$ once more,
we see that this can be positive semidefinite only if $g_1' \leq 1$,
i.e., $\sum_{j=0}^{2n} g_j^2 \leq 1$.

Conversely, if $\sum_{j=0}^{2n} g_j^2 \leq 1$, then the (Hermitian)
operator $A = \sum_j g_j \Gamma_j$ has the property
\[
  A^2 = \sum_{jk} g_j g_k \Gamma_j\Gamma_k
      = \sum_j g_j^2 \1 \leq \1,
\]
i.e. $-\1 \leq A \leq \1$, so $\sigma = \frac{1}{d}(\1+A) \geq 0$.
\qed

\bigskip\noindent
{\bf Concavity of $\mathbf{f(t) = H\left( \frac{1 \pm \sqrt{t}}{2} \right)}$.}
Straightforward calculation shows that
\[
  f'(t) = \frac{1}{4\ln 2}\frac{1}{\sqrt{t}} \bigl( \ln(1-\sqrt{t})-\ln(1+\sqrt{t}) \bigr),
\]
and so
\[
  f''(t) = \frac{1}{8\ln 2}\frac{1}{t^{3/2}} \left( \ln\frac{1+\sqrt{t}}{1-\sqrt{t}}
                                                            - \frac{2\sqrt{t}}{1-t} \right).
\]
Since we are only interested in the sign of the second derivative, we ignore the
(positive) factors in front of the bracket, and are done if we can show that
\[\begin{split}
  g(t) &:= \ln \frac{1+\sqrt{t}}{1-\sqrt{t}} - \frac{2\sqrt{t}}{1-t} \\
       &=  \ln(1+\sqrt{t}) + \frac{1}{1+\sqrt{t}}
           - \ln(1-\sqrt{t}) - \frac{1}{1-\sqrt{t}}
\end{split}\]
is non-positive for $0 \leq t \leq 1$. Substituting $s = 1-\sqrt{t}$, which
is also between $0$ and $1$, we rewrite this as
\[
  h(s) = -\ln s - \frac{1}{s} + \ln(2-s) + \frac{1}{2-s},
\]
which has derivative
\[
  h'(s) = (1-s)\left( \frac{1}{s^2} - \frac{1}{(2-s)^2}\right),
\]
and this is clearly positive for $0<s<1$. In other words, $h$ increases
from its value at $s=0$ (where it is $h(0)=-\infty$) to its value at $s=1$
(where it is $h(1)=0$), so indeed $h(s) \leq 0$ for all $0\leq s\leq 1$.

Consequently, also $f''(t) \leq 0$ for $0\leq t \leq 1$, and we are done.
\qed

\bigskip\noindent
{\bf Constructive proof of Lemma 1.}
For the interested reader, we now give an explicit construction of the unitaries $U(T)$
and $U(F_j)$, which however requires a more intimate knowledge of the 
Clifford algebra. First of all, recall that we can write
two vectors $a,b \in \Real^{2n}$ in terms of the generators of the Clifford algebra
as $a = \sum_{j=1}^{2n} a_j \Gamma_j$ and $b = \sum_{j=1}^{2n} b_j \Gamma_j$. The Clifford product
of the two vectors is defined as $ab = a\cdot b + a \wedge b$, where $a \wedge b$ is the
outer product of the two vectors~\cite{lounesto:book,doran:lasenby:book}. When using 
the matrix representation of the Clifford algebra given above, this product is simply the
matrix product. Second, it is well known that within the Clifford algebra we may write 
the vector resulting from a reflection of the vector $a$ on the plane perpendicular to 
the vector $b$ (in 0) as $-b a b$. Rotations can then be expressed as 
successive reflections~\cite{lounesto:book,doran:lasenby:book}.

We first consider $U(T)$.
Here, our goal is to find the transformation $U(T)$ that rotates the vector $g = \sum_{j=0}^{2n} g_j \Gamma_j$
to the vector $b = \sqrt{\ell} \Gamma_1$, where we let $\ell := \sum_{j=0}^{2n} g_j^2$. Finding such a transformation
for only the first $2n$ generators can easily be achieved. The challenge is thus to include $\Gamma_0$. To this
end we perform three individual operations: First, we rotate $g' = \sum_{j=1}^{2n} g_j \Gamma_j$ onto the vector 
$b' = \sqrt{\ell'} \Gamma_1$ with $\ell' := \sum_{j=1}^{2n} g_j^2$. Second, we exchange $\Gamma_2$ and $\Gamma_0$. 
And finally we rotate the vector $g'' = \sqrt{\ell'} \Gamma_1 + g_0 \Gamma_2$ onto the vector $b = \sqrt{\ell}\Gamma_1$.

First,  we rotate $g' = \sum_{j=1}^{2n} g_j \Gamma_j$ onto the vector 
$b' = \sqrt{\ell'} \Gamma_1$:
Consider the vector
$\hat{g} = \frac{1}{\sqrt{\ell'}} g'$ .
We have $\hat{g}^2 = |\hat{g}|^2 \1 = \1$ and
thus the vector is of length 1. 
Let $m = \hat{g}+\Gamma_1$ denote the vector lying in the plane
spanned by $\Gamma_1$ and $\hat{g}$ located exactly half way between $\Gamma_1$ and $\hat{g}$. 
Let $\hat{m} = c (\hat{g}+\Gamma_1)$ with 
$c = \frac{1}{\sqrt{2(1+g_1/\sqrt{\ell'})}}$. It is easy to verify that $\hat{m}^2 = \1$ and hence the
vector $\hat{m}$ has length 1. To rotate the vector $g'$ onto the vector $b'$, we now need to first reflect
$g'$ around the plane perpendicular to $\hat{m}$, and then around the plane perpendicular to $\Gamma_1$.
Hence, we now define $R = \Gamma_1\hat{m}$. 
Evidently, $R$ is unitary since $RR^{\dagger} = R^{\dagger}R = \1$. 
First of all, note that
\begin{eqnarray*}
Rg' &=& \Gamma_1\hat{m}g'\\
& =& c \Gamma_1 \left(\frac{1}{\sqrt{\ell'}} g' + \Gamma_1\right)g' \\ 
&=& c \left(\Gamma_1\frac{a^2}{\sqrt{\ell'}} + \Gamma_1^2 g'\right)\\
&=& c \sqrt{\ell'}\left(\Gamma_1 + \frac{1}{\sqrt{\ell'}}g'\right)\\
&=&\sqrt{\ell'}\hat{m}.
\end{eqnarray*}
Hence,
$$
Rg'R^\dagger = \sqrt{\ell'}\hat{m}\hat{m}\Gamma_1 = \sqrt{\ell'}\Gamma_1 = b',
$$
as desired.
Using the geometry of the Clifford algebra, one can see that $k$-vectors remain $k$-vectors when transformed with the 
rotation $R$~\cite{doran:lasenby:book}. Similarly, it is easy to see that $\Gamma_0$ is untouched by the operation $R$
\begin{eqnarray*}
R \Gamma_0 R^\dagger = \Gamma_0 R R^\dagger = \Gamma_0,
\end{eqnarray*}
since $\{\Gamma_0,\Gamma_j\}=0$ for all $j \in \{1,\ldots,2n\}$.
We can thus conclude that 
\begin{eqnarray*}
R \rho R^\dagger = \frac{1}{d}\left(\1 + \sqrt{\ell'} \Gamma_1 + g_0 \Gamma_0 + \sum_{j<k} g'_{jk} \Gamma_{jk}+ \ldots\right),
\end{eqnarray*}
for some coefficients $g'_{jk}$.

Second, we exchange $\Gamma_2$ and $\Gamma_0$: To this end, recall that 
$\Gamma_2,\ldots,\Gamma_{2n},\Gamma_0$ is also a generating set for the Clifford algebra. Hence, we can now
view $\Gamma_0$ itself as a vector with respect to the new generators. 
To exchange $\Gamma_0$ and $\Gamma_2$, we now simply rotate $\Gamma_0$ onto $\Gamma_2$. Essentially,
this corresponds to a rotation about 90 degrees in the plane spanned by vectors $\Gamma_0$ and $\Gamma_2$. 
Consider the vector $n = \Gamma_0 + \Gamma_2$ located exactly in the middle between both vectors.
Let $\hat{n} = n/\sqrt{2}$ be the normalized vector. Let $R' = \Gamma_2 \hat{n}$. A small calculation anlogous to 
the above shows that 
$$
R' \Gamma_0 R{'^\dagger} = \Gamma_2\mbox{ and } R'\Gamma_2 R^{'\dagger} = - \Gamma_0.
$$
We also have that $\Gamma_1$, $\Gamma_3,\ldots,\Gamma_{2n}$ are untouched by the operation: for $j\neq 0$ 
and $j \neq 2$, we have that
$$
R' \Gamma_j R^{'\dagger} = \Gamma_j,
$$
since $\{\Gamma_0,\Gamma_j\} = \{\Gamma_2,\Gamma_j\} = 0$. How does $R'$ affect the $k$-vectors in terms
of the original generators $\Gamma_1,\ldots,\Gamma_{2n}$? Using the anti-commutation relations and the definition
of $\Gamma_0$ it is easy to convince yourself that all $k$-vectors are mapped to $k'$-vectors with $k' \geq 2$ (except for $\Gamma_0$ itself). Hence,
the coefficient of $\Gamma_1$ remains untouched.
We can thus conclude that 
\begin{eqnarray*}
R' R\rho R^\dagger R^{'\dagger} = \frac{1}{d}\left(\1 + \sqrt{\ell'} \Gamma_1 + g_0 \Gamma_2 + \sum_{j<k} g''_{jk} \Gamma_{jk} + \ldots\right),
\end{eqnarray*}
for some coefficients $g''_{jk}$.

Finally, we now rotate the vector $g'' = \sqrt{\ell'} \Gamma_1 + g_0 \Gamma_2$ onto the vector $b$.
Note that $(g'')^2 = (\ell + g_0^2) \1 = \ell \1$. Let $\hat{g}'' = g''/\sqrt{\ell}$ be the normalized
vector. Our rotation is derived exactly analogous to the first step: Let $k = \hat{g}'' + \Gamma_1$, and
let $\hat{k} = k/\sqrt{2(1+\sqrt{\ell'}/\sqrt{\ell})}$. Let $R'' = \Gamma_1\hat{k}$. A simple calculation analogous
to the above shows that
$$
R''g''R^{''\dagger} = \sqrt{\ell} \Gamma_1,
$$
as desired. Again, we have $R'' \Gamma_k R''^{\dagger} = \Gamma_k$ for $k \neq 1$ and $k \neq 2$. Furthermore,
$k$-vectors remain $k$-vectors under the actions of $R''$~\cite{doran:lasenby:book}.
Summarizing, we obtain
\begin{eqnarray*}
R'' R' R\rho R^\dagger R^{'\dagger} R^{''\dagger}= \frac{1}{d}\left(\1 + \sqrt{\ell} \Gamma_1 + \sum_{j<k} g'''_{jk} \Gamma_{jk}+ \ldots\right),
\end{eqnarray*}
for some coefficients $g'''_{jk}$. Thus, we can take $U(T) = R'' R' R$.

The argument for finding $U(F_j)$ is analogous. A simple computation using the fact that $\{\Gamma_0,\Gamma_j\} = 0$
for all $j$ gives us $U(F_j) = \Gamma_0 \Gamma_j$.
\end{document}